\journal{Theoretical Computer Science}
\newtheorem{theorem}{Theorem}
\newtheorem{lemma}[theorem]{Lemma}
\newcommand{\BWT}{\ensuremath{\mathrm{BWT}}}
\newcommand{\SA}{\ensuremath{\mathrm{SA}}}
\newcommand{\LF}{\ensuremath{\mathrm{LF}}}
\newcommand{\rank}{\ensuremath{\mathrm{rank}}}
\newcommand{\pred}{\ensuremath{\mathrm{pred}}}
\newcommand{\BB}{\ensuremath{\mathit{B}}}
\begin{document}

\begin{frontmatter}

\title{Refining the $r$-index\tnoteref{cpm}}

\tnotetext[cpm]{A preliminary version~\cite{BGI18} of this paper was presented at CPM 2018 under the title ``Online LZ77 parsing and matching statistics with RLBWTs''.}

  \author[kyushu,riken]{Hideo Bannai\fnref{hideo}}

\ead{bannai@inf.kyushu-u.ac.jp}

\address[kyushu]{Department of Informatics, Kyushu University, Japan}

\address[riken]{RIKEN Center for Advanced Intelligence Project, Japan}

  \fntext[hideo]{Partially funded by JSPS KAKENHI Grant Number JP16H02783.}
\author[du,udp,cebib]{Travis Gagie\fnref{travis}}

\ead{travis.gagie@gmail.com}

\address[du]{Faculty of Computer Science, Dalhousie University, Canada}

\address[udp]{School of Computer Science and Telecommunications, Diego Portales University, Chile}

\address[cebib]{Center for Biotechnology and Bioengineering, Chile}

\fntext[travis]{Partially funded by Fondecyt grant 1171058.}

  \author[kyutech]{Tomohiro I\fnref{tomohiro}\corref{corresponding}}

\fntext[tomohiro]{Partially funded by JSPS KAKENHI Grant Number JP16K16009.}

\ead{tomohiro@ai.kyutech.ac.jp}

\cortext[corresponding]{Corresponding author.}

\address[kyutech]{Department of Artificial Intelligence, Kyushu Institute of Technology, Japan}

\begin{abstract}
Gagie, Navarro and Prezza's $r$-index (SODA, 2018) promises to speed up DNA alignment and variation calling by allowing us to index entire genomic databases, provided certain obstacles can be overcome.  In this paper we first strengthen and simplify Policriti and Prezza's Toehold Lemma (DCC '16; {\it Algorithmica}, 2017), which inspired the $r$-index and plays an important role in its implementation.  We then show how to update the $r$-index efficiently after adding a new genome to the database, which is likely to be vital in practice.  As a by-product of this result, we obtain an online version of Policriti and Prezza's algorithm for constructing the LZ77 parse from a run-length compressed Burrows-Wheeler Transform.  Our experiments demonstrate the practicality of all three of these results.  Finally, we show how to augment the $r$-index such that, given a new genome and fast random access to the database, we can quickly compute the matching statistics and maximal exact matches of the new genome with respect to the database.
\end{abstract}

\begin{keyword}
Burrow-Wheeler Transform \sep FM-index \sep $r$-index \sep dynamic indexing \sep LZ77 parsing \sep matching statistics
\end{keyword}

\end{frontmatter}


\section{Introduction}
\label{sec:introduction}

Since the turn of the millennium, advances in DNA sequencing technologies have taken us from sequencing a full human genome for the first time to storing databases of hundreds of thousands of genomes.  These advances have far outpaced Moore's Law and now processing and storing genomic data are becoming a bottleneck.  After running a DNA sample through a sequencing machine to obtain tens or hundreds of millions of overlapping substrings of the genome, called {\em reads}, the next step is usually to determine how the newly sequenced genome differs from a reference genome.  This process is known as {\em variation calling} and consists of aligning each read to the most similar section of the reference, building a consensus sequence from the aligned reads, comparing that to the reference sequence, and then encoding the differences in {\em variation-calling format} (VCF)~\cite{VCF}.  Because humans are genetically almost identical, variation calling is drastically easier than assembling a genome without a reference, which is known as {\it de novo} assembly.  {\it De novo} assembly is often likened to building a huge jigsaw puzzle without the box, while variation calling is like building one while looking at the box from a slightly different puzzle.  Of course, both processes are complicated by sequencing errors, uneven coverage of the genome by the reads, repetitions in the genome, etc.

The matching in variation calling is usually done with Bowtie~\cite{Bowtie}, BWA~\cite{BWA}, or other software based on the FM-index~\cite{FMindex}, the success of which has turned it into a cornerstone of bioinformatics and compact data structures.  Although the FM-index is well-suited to indexing a single reference genome, however, the standard implementation does not scale well to genomic databases.  Such scalability is desirable because if we include more genomes in our index, then more reads will match exactly some section of one or more of those genomes, reducing the need for more difficult approximate matching~\cite{edit-distance,Tania}.  Aligning reads against whole genomic databases is called {\em pan-genomic alignment}~\cite{consortium} and should help genomic processing and storage catch up with sequencing.  Unfortunately, although several authors have proposed other kinds of indexes (see, e.g.,~\cite{kernels,Veli} and references therein), they lack the complete functionality of the FM-index and have not achieved the same popularity.  In particular, they often limit the maximum length of a pattern, which will become problematic as reads get longer and more accurate (so matches get longer).  Figure~\ref{fig:assembly} gives a very small example of {\it de novo} assembly, variation calling, and the advantage of pan-genomic alignment.

\begin{figure}[t]
\begin{center}
\begin{tabular}{ccc}
\includegraphics[width=.3\textwidth]{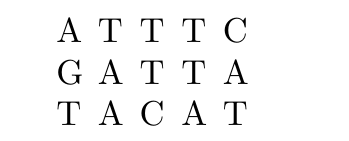} & \includegraphics[width=.3\textwidth]{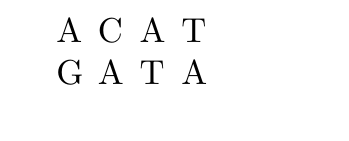} & \includegraphics[width=.3\textwidth]{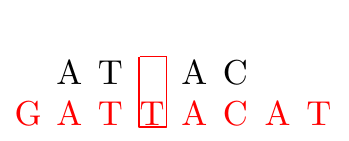}\\
\includegraphics[width=.3\textwidth]{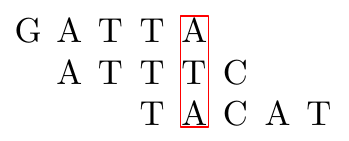} & \includegraphics[width=.3\textwidth]{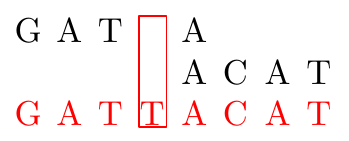} & \includegraphics[width=.3\textwidth]{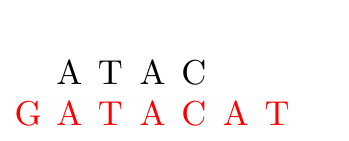}\\
\includegraphics[width=.3\textwidth]{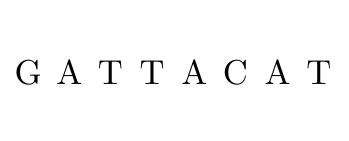} & \includegraphics[width=.3\textwidth]{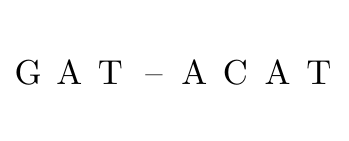} &\\
\bf (a) & \bf (b) & \bf (c)
\end{tabular}
\caption{{\bf (a)} {\it De novo} assembly of the reads {\sf ATTTC}, {\sf GATTA} and {\sf TACAT} into {\sf GATTACAT}, indicating the third {\sf T} in {\sf ATTTC} is an error.  {\bf (b)} Variation calling of the reads {\sf ACAT} and {\sf GATA} against the reference {\sf GATTACAT}, indicating they come from a genome with the second {\sf T} missing.  {\bf (c)} The read {\sf ATAC} does not match exactly against the reference {\sf GATTACAT} but does against the second genome {\sf GATACAT} we assembled, so if we add that genome to the index then we can avoid using approximate pattern matching to align that read.}
\label{fig:assembly}
\end{center}
\end{figure}

To understand why the standard implementation of the FM-index does not scale well, it helps to examine its two main components: first, a rank data structure over the Burrows-Wheeler Transform (BWT) of the reference, with which we compute the interval of the suffix array (SA) containing the starting positions of the given pattern, which tells us how often the pattern occurs; and second, an SA sample, with which we can recover the contents of that interval, which tells us where the pattern occurs.  Although the run-length compressed BWT (RLBWT) of the database stays small as we add more genomes~\cite{LZapproximation}, the regular SA sample either expands or slows down, such that the product of its size and query time grows linearly with the database.  For example, if our current database is {\sf GATTACAT\$$_1$GATACAT\$$_2$GATTAGATA\$$_3$} and we append {\sf GATAGATTA\$$_4$}, then the BWT changes from {\sf TTATTTTCCGGGGAAA\$$_1$\$$_3$\$$_2$AAATATAA} to {\sf TTAATTTTTTCCGGGGGGAAA\$$_1$\$$_3$A\$$_4$\$$_2$AAATTATAAAA}, with only about 14\% more runs, while an SA sample with the same query time grows by about 37\%.  This divergence becomes more pronounced when the genomes are longer, more similar and more numerous.

Policriti and Prezza~\cite{PP17} showed how we can store SA entries only at the beginning and end of each run in the BWT and still quickly return the location of {\em one} occurrence of the given pattern, and used this to obtain an efficient algorithm for turning the RLBWT into the LZ77 parse.  We refer to their result about finding one occurrence as the {\em Toehold Lemma}, since Gagie, Navarro and Prezza~\cite{GNP18} recently built on it to obtain a scalable version of the FM-index, called the {\em $r$-index}, which promises to make pan-genomic alignment practical and useful.  Before that promise can be fulfilled, however, several obstacles must still be overcome: first, we need efficient algorithms to build RLBWTs and SA samples of genomic databases, which are the main components of $r$-indexes; second, we need an efficient way to update the $r$-index when we add a new genome to the database, because rebuilding it regularly will be prohibitively slow regardless of the algorithms we use; and third, as reads become longer and more likely to contain combinations of variation that we have seen before individually but not all together, we will need support for finding maximal exact matches between the read and the database.  Boucher et al.~\cite{pfp_WABI,2019BoucherGKLMM_PrefixFreeParsinForBuild} and Kuhnle et al.~\cite{pfp_rindex} have since made substantial progress on the first point, and in this paper we address the second one and give a theoretical solution to the third.  As a by-product of making the $r$-index dynamic, we obtain an online algorithm for computing the LZ77 parse in space bounded in terms of the number of runs in the BWT.

In Section~\ref{sec:preliminaries} we review some previous results that we will use throughout this paper, and strengthen Policriti and Prezza's Toehold Lemma to require SA entries only at the beginnings of the runs in the BWT --- which significantly improves the practical performance of the $r$-index~\cite{pfp_rindex} --- and simplify its proof.  In Section~\ref{sec:dynamizing} we show how to update the $r$-index efficiently when adding a new genome to the database, and in Section~\ref{sec:lz77} we show how that can be applied to compute the LZ77 parse online from a growing $r$-index.  In this paper we concern ourselves only with adding a new genome, not with supporting insertions at a specified point (between two given genomes currently adjacent in the current database); however, we note that this seems possible by combining our approach with Mantaci et al.'s extended BWT~\cite{eBWT}.  Finally, in Section~\ref{sec:statistics} we show how to further augment the $r$-index such that, given a new genome and fast random access to the database (which can easily be added to VCF files), we can quickly compute the matching statistics and maximal exact matches of the new genome with respect to the database.  Matching statistics are a popular tool in bioinformatics and so calculating them is of independent interest, but in this case we are motivated by rare-disease detection and variation calling with maximal exact matches.  We note that in the conference version of this paper the additional space was $O (r \sigma)$ words, where $\sigma$ was the size of the alphabet, but we have reduced this to $O (r)$.  In the future we plan to implement our last result and use it for rare disease detection and to build a version of BWA-MEM~\cite{BWA-MEM} that works with entire genomic databases.

\section{Preliminaries}\label{sec:preliminaries}

In this section, we introduce basic notations on BWTs and review how to update a standard BWT or RLBWT when a character is prepended to the text. We also describe our simplification of Policriti and Prezza's augmented RLBWT\@.

\subsection{Basic notations on BWTs}
Let $T$ be a string of length $n$.
The suffix array $\SA$ of $T$ is an integer array of length $n$ such that
$T[\SA[i] .. n]$ is the $i$-th smallest suffix among the non-empty suffixes of $T$~\cite{Manber1993SAN}.
Let $\SA^{-1}[\cdot]$ denote the inverse suffix array, for which $\SA^{-1}[\SA[i]] = i$ for any $1 \le i \le n$.
The BWT of $T$ was originally defined by the last column of the matrix consisting of sorted cyclic rotations of $T$~\cite{BW94}.
Alternatively, if we assume $T$ ends with a special character $\$$ that does not occur elsewhere in $T$,
the BWT can be formulated by the suffix array as follows: $\BWT[i] = T[\SA[i] - 1]$ if $\SA[i] \neq 1$, and otherwise $\BWT[i] = \$$.
In this paper, we always assume the existence of $\$$.

A basic procedure on BWTs is a \emph{last-to-first mapping} defined as $\LF (i) = \SA^{-1} [\SA [i] - 1]$ for $i$ with $\SA[i] \neq 1$,
which returns the lexicographic rank of $T[\SA [i] - 1 .. n]$.
$\LF (i)$ can be calculated by $C(\BWT[i]) + \rank_{\BWT[i]} (i)$, where
$C(c)$ is the number of occurrences of any character smaller than $c$ in $T$ and
$\rank_{c} (i)$ is the number of occurrences of a character $c$ in $\BWT[1..i]$.
This is based on an important observation that
any suffix starting with a character smaller than $\BWT[i]$ lexicographically precedes $T[\SA [i] - 1 .. n]$
and there are $\rank_{\BWT[i]} (i)$ suffixes that starts with $\BWT[i]$ and lexicographically precedes $T[\SA [i] - 1 .. n]$.

Given a pattern $P$ that occurs in $T$, $P$ can be associated with a unique interval
$\BWT [j..k]$ such that $P$ is a prefix of $T[\SA[i]..n]$ iff $j \le i \le k$.
Here $j - k + 1$ represents the number of occurrences of $P$ in $T$ and
the suffix array entries in the interval represents the positions at which $P$ occurs.
Given the interval $\BWT [j..k]$ for $P$ and a character $c$,
the procedure called \emph{backward searches} is to compute the interval for $cP$,
which can be computed by $\BWT[ C(c) + \rank_{c} (j - 1) + 1, C(c) + \rank_{c} (k)]$.

\subsection{Updating an RLBWT}\label{subsec:updating_1}

We consider constructing RLBWT while reading $T$ from right to left 
because updating RLBWTs with prepending a character is easier than appending a character.
Suppose we have an RLBWT for $T [i + 1..n]$ and know the position $k$ of $\$$ in the current BWT\@.  To obtain an RLBWT for $T [i..n]$, we compute $\rank_{T [i]} (k)$ and use it to compute the position $k'$ to which $\$$ will move. We replace $\$$ by $T [i]$ in the RLBWT, which may require merging that copy of $T [i]$ with the preceding run, the succeeding run, or both.  We then insert $\$$ at $\BWT [k']$, which may require splitting a run.  Updating the RLBWT for the reversed string $T^R$ of $T$ is symmetric when we append a character to $T$.  Ohno et al.~\cite{OhnoSTIS18_jda} gave a practical implementation that works in $O (r)$ space and supports updates and backward searches in $O (\log r)$ time per character in the pattern.

\begin{lemma}[see, e.g.,~\cite{OhnoSTIS18_jda}]\label{lem:updating_1}
We can build an RLBWT for $T^R$ incrementally, starting with the empty string and iteratively prepending $T [1], \ldots, T [n]$ --- so that after $i$ steps we have an RLBWT for ${(T [1..i])}^R$ --- using a total of $O (n \log r)$ time.  Backward searches always take $O (\log r)$ time per character in the pattern.
\end{lemma}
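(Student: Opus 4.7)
The plan is to turn the update procedure sketched in the paragraph before the statement into a concrete dynamic data structure and account for the cost of each elementary operation. The state maintained during construction consists of: (i) the current RLBWT represented as a sequence of runs; (ii) the position $k$ of $\$$; (iii) the array $C$; and (iv) a dynamic rank index over the current BWT. For (i) I would use a balanced binary search tree whose leaves are runs and whose internal nodes store the total length of the BWT spanned by each subtree, so that for any position $i$ we can locate the enclosing run in $O(\log r)$ time, and so that inserting a run, deleting a run, splitting a run, or merging two adjacent runs each cost $O(\log r)$.

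For the rank index, I would maintain, in addition to the main tree, one auxiliary balanced tree for each character $c$ currently appearing in the BWT, whose leaves are the runs of $c$ and whose internal nodes store the total number of $c$'s in the subtree. A query $\rank_c(i)$ is answered by locating $i$ in the main tree and then summing, in the tree for $c$, the contributions of runs of $c$ lying strictly before $i$ together with a possible partial contribution of a run of $c$ straddling $i$; this costs $O(\log r)$. Each insertion, deletion, split, or merge of a run of character $c$ touches only $O(1)$ leaves of the auxiliary tree for $c$, so the rank index is updated in $O(\log r)$ time per RLBWT modification. The array $C$ is supported by a Fenwick tree or balanced BST over the at most $r$ distinct characters occurring in the BWT, giving $O(\log r)$ queries and updates.

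Given these structures, the step that prepends $T[i]$ consists of: one $\rank_{T[i]}(k)$ query plus a lookup in $C$ to obtain $k'$; overwriting $\$$ at position $k$ with $T[i]$ (possibly merging with up to two adjacent runs); inserting a new $\$$ at position $k'$ (possibly splitting the enclosing run); and propagating these $O(1)$ structural changes into the per-character trees and into $C$. Each action touches $O(1)$ nodes of the main tree and $O(1)$ leaves across all auxiliary trees, so the step costs $O(\log r)$ and the total over $n$ steps is $O(n \log r)$. Backward search of a pattern $P$ uses two $\rank$ queries and two $C$ lookups per character of $P$ via the formula recalled in the preliminaries, and hence takes $O(\log r)$ per character.

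The main subtlety, and thus the main obstacle, is keeping the per-character rank trees and the tally in $C$ consistent under run merges and splits while ensuring that the sizes of all involved trees remain bounded by the current number $r$ of runs: one has to handle cleanly the creation of a rank tree when a letter first appears in the BWT and its destruction when the last run of that letter disappears, and verify that every elementary RLBWT update affects only a constant number of leaves across all auxiliary structures, so that the $O(\log r)$ per-step bound actually holds as $r$ evolves.
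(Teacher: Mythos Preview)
The paper does not give its own proof of this lemma: it is stated with a citation to Ohno et al.\ and preceded only by the high-level description of the update step (replace $\$$ by $T[i]$, merge if needed, insert $\$$ at $k'$, split if needed). So your proposal is not competing with a proof in the paper; it is supplying one where the paper simply defers to the literature.

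Your sketch is a sound and standard way to realise the bound, and the accounting for updates and for backward search is correct. The one point that is glossed over is the sentence ``locating $i$ in the main tree and then summing, in the tree for $c$, the contributions of runs of $c$ lying strictly before $i$'': as written, the per-character tree for $c$ is keyed only by the relative order of $c$-runs, not by BWT positions, so after finding the run containing $i$ in the main tree you still need a mechanism to jump to the correct place in the $c$-tree when that run is \emph{not} a $c$-run. This is fixable---e.g., by cross-pointers between leaves of the two trees together with a predecessor search for the nearest $c$-run in the main tree, or by storing per-character subtree counts in the main tree (at the cost of an extra $\sigma$ factor in space, which is harmless for genomic alphabets)---but it deserves an explicit sentence, since it is exactly the kind of linkage that must also be maintained under the splits and merges you flag at the end. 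With that clarified, your argument goes through and matches the $O(\log r)$-per-step bound the paper quotes.
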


\subsection{Refining the Toehold Lemma}
\label{subsec:toehold}

Policriti and Prezza augmented the RLBWT to store the SA entries $\SA [i]$ and $\SA [j]$ that are the positions in the text of the first and last characters in each run $\BWT [i..j]$.  They showed how, with this extra information, a backward search for a pattern can be made to return the location of one of its occurrence (assuming it occurs at all).

We can simplify and strengthen Policriti and Prezza's result slightly, storing only the position of the first character of each run and finding the starting position of the lexicographically first suffix starting with a given pattern.  When we start a backward search for a pattern $P [1..m]$, the initial interval is all of $\BWT [1..n]$ and we know $\SA [1]$ since $\BWT [1]$ must be the first character in a run.  Now suppose we have processed $P [i..m]$, the current interval is $\BWT [j..k]$ and we know $\SA [j]$.  If $\BWT [j] = P [i - 1]$ then the interval for $P [i - 1..m]$ starts with $\BWT [\LF (j)]$,
and so we know $\SA [\LF (j)] = \SA [j] - 1$.  Otherwise, the interval for $P [i - 1..m]$ starts with $\BWT [\LF (j')]$, where $j'$ is the position of the first occurrence of $P [i - 1]$ in $\BWT [j..k]$; since $\BWT [j']$ is the first character in a run, $j'$ is easy to compute and we have $\SA [j']$ stored and can thus compute $\SA [\LF (j')] = \SA [j'] - 1$.

\begin{lemma}
\label{lem:search}
We can augment an RLBWT with $O (r)$ words, where $r$ is the number of runs in the BWT, such that after each step in a backward search for a pattern, we can return the starting position of the lexicographically first text suffix prefixed by the suffix of the pattern we have processed so far.
\end{lemma}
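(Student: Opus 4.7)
My plan is to augment the RLBWT by storing, for each of the $r$ runs, the SA entry corresponding to the run's first position — clearly $O(r)$ words of extra space. I would then prove correctness by induction on the number of backward-search steps, maintaining the invariant that whenever the current BWT interval for the suffix of the pattern processed so far is $\BWT[j..k]$, the value $\SA[j]$ is known to the algorithm. The base case is immediate: before any character is processed, the interval is $\BWT[1..n]$, and $j = 1$ is the first position of the first run, so $\SA[1]$ is stored by construction.

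For the inductive step, suppose we have processed $P[i..m]$ with interval $\BWT[j..k]$ and know $\SA[j]$, and that we are about to extend the search with $c = P[i-1]$. The left endpoint of the new interval is $\LF(j')$, where $j'$ is the leftmost occurrence of $c$ in $\BWT[j..k]$. In the easy case $\BWT[j] = c$ we have $j' = j$, and the LF property gives directly $\SA[\LF(j)] = \SA[j] - 1$. In the other case $\BWT[j] \neq c$, so $j' > j$ and $\BWT[j'-1] \neq \BWT[j'] = c$; consecutive BWT characters disagree precisely at run boundaries, so $j'$ is the first position of a run and its SA entry is stored by construction, yielding $\SA[\LF(j')] = \SA[j'] - 1$ as the new invariant. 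In both cases the invariant is restored; and $\SA[j]$ is by definition the starting position of the lexicographically smallest text suffix prefixed by the processed portion of the pattern, which is what the lemma asks us to return.

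The main obstacle I foresee is not correctness but the efficient computation of $j'$, namely locating the first position within $\BWT[j..k]$ holding character $c$. This is a purely RLBWT-side question, unrelated to the SA augmentation: given the head positions of the runs of $c$, it reduces to a successor query among them, which any standard $O(r)$-space RLBWT representation already supports in $O(\log r)$ time. Hence the augmentation sits on top of the existing run-indexing machinery with no asymptotic overhead, and — unlike Policriti and Prezza's original formulation — we avoid storing SA samples at run ends, since the inductive argument only ever consults SA values at positions that are either run starts or the current left endpoint $j$.
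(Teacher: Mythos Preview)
Your proposal is correct and follows essentially the same argument as the paper: store $\SA$ only at run heads, maintain the invariant that $\SA[j]$ is known for the current interval $\BWT[j..k]$, and in the inductive step distinguish whether $\BWT[j]=P[i-1]$ (use $\SA[\LF(j)]=\SA[j]-1$) or not (the first occurrence $j'$ of $P[i-1]$ in the interval is a run head, so $\SA[j']$ is stored). Your added remark on locating $j'$ via a successor query among $c$-run heads is a reasonable elaboration of what the paper leaves as ``$j'$ is easy to compute''.
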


Generalizing a little bit the above trick, we get the following argument, which will be used to support online update of augmented RLBWTs.
\begin{lemma}\label{lem:tracking}
Suppose we have the augmented RLBWT for $T$, which allows us to access the $\SA$ entry for the first character of every run.
If we know $j = \SA [k+1]$ for some position $k$,
we can compute, for any character $c$, the text position $j'$ such that 
$T [j'..]$ is the lexicographically smallest suffix that is larger than $c T[\SA [k]..]$ (if such $T [j'..]$ exists).
\end{lemma}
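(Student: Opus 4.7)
The plan is to view $j'$ as the text position of the suffix that immediately follows the (possibly non-existent) string $c \cdot T[\SA[k]..]$ in the sorted list of suffixes of $T$, and then read that position out of the augmented RLBWT. First I would record the standard correspondence that suffixes of $T$ starting with $c$ match up exactly with the positions $i$ in $\BWT$ where $\BWT[i] = c$: if those positions are $i_1 < i_2 < \cdots < i_t$, then $c \cdot T[\SA[i_\ell]..]$ is the $\ell$-th lex-smallest suffix starting with $c$, and its text position is $\SA[i_\ell] - 1$. Because comparing two suffixes that share the prefix $c$ reduces to comparing their tails, which are ordered by SA rank, we have $c \cdot T[\SA[k]..] < c \cdot T[\SA[i_\ell]..]$ iff $k < i_\ell$; the smallest suffix starting with $c$ that strictly exceeds $c \cdot T[\SA[k]..]$ (if any) is therefore the one picked out by the smallest $i_\ell > k$.

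From here the argument splits into three cases. (a) If $\BWT[k+1] = c$, then $i^{*} := k+1$ is that position. Note that $k+1$ is typically \emph{not} the start of a run, so the augmentation does not record $\SA[k+1]$; however, the hypothesis hands us $\SA[k+1] = j$ directly, so we return $j' = j - 1$. (b) If $\BWT[k+1] \neq c$ but $c$ still appears in $\BWT[k+1..n]$, then the first such occurrence $i^{*}$ must begin a run, because the character immediately preceding it in the BWT is not $c$; the augmentation therefore stores $\SA[i^{*}]$ and we return $j' = \SA[i^{*}] - 1$. (c) If $c$ does not occur anywhere in $\BWT[k+1..n]$, then every suffix starting with $c$ is at most $c \cdot T[\SA[k]..]$, so the desired suffix must instead start with the smallest character $c' > c$ that appears in $T$; if no such $c'$ exists there is no answer, and otherwise the lex-smallest suffix starting with $c'$ is the $\LF$-image of the first occurrence $p$ of $c'$ in the whole BWT, which is again the start of a run, so the augmentation gives $\SA[p]$ and we return $j' = \SA[p] - 1$.

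The auxiliary operations required --- locating the run that contains position $k+1$, finding the next run whose head character is $c$ after a given position, and finding the smallest character $c' > c$ that occurs in $\BWT$ --- are all standard on top of the RLBWT and fit within $O(r)$ words on top of Lemma~\ref{lem:search}, for example via predecessor structures on the run boundaries of each character. The step I expect to be most delicate is case (a): the reflex is to look up $\SA[i^{*}]$ in the augmentation, which fails because $i^{*} = k+1$ need not be a run start. The key observation that rescues the argument is that the hypothesis of the lemma is tailored to supply exactly this missing value; cases (b) and (c) then follow from the fact that a ``first occurrence of some character after a cutoff'' in the BWT is automatically at the start of a run and is therefore already covered by the Toehold augmentation of Lemma~\ref{lem:search}.
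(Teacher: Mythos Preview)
Your argument is correct and follows essentially the same line as the paper's own proof: the paper also splits on whether $c$ occurs in $\BWT[k+1..n]$, handles the subcase $p=k+1$ by invoking the hypothesis $j=\SA[k+1]$, observes that otherwise the first occurrence of $c$ (or of the next larger character $c'$) after position $k$ is necessarily a run head so its $\SA$ entry is stored, and returns $\SA[p]-1$. Your three-way split (a)/(b)/(c) is just the paper's two cases with the first case's two subcases pulled apart; the paper additionally remarks on the edge case $k=n$ (where $k+1$ is out of bounds), which your case (c) absorbs since $\BWT[k+1..n]$ is then vacuously empty.
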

\begin{proof}
Let $i = \SA[k]$
We consider two cases depending on whether $\BWT [k+1..]$ contains $c$ or not.
\begin{itemize}
\item If $\BWT [k+1..]$ contains $c$:
  Let $p$ be the smallest position such that $\BWT [p] = c$ in $\BWT [k+1..]$.
  Then it holds that $c T[\SA[p]..] = T [j'..]$, namely, $j' = \SA [p] - 1$.
  If $p = k+1$, we have $j = \SA [p]$ by the assumption.
  Otherwise, $p$ must be the first position of a $c$'s run, and thus, we have $\SA [p]$ stored.
\item If $\BWT [k+1..]$ does not contain $c$:
  Let $c'$ be the lexicographically smallest character that appears in $T$ and is larger than $c$.
  If such $c'$ does not exist, it means $c T[\SA [k]..]$ is larger than the lexicographically largest suffix of $T$, and thus, $T [j'..]$ does not exist.
  If $c'$ exists, then it holds that $c' T[\SA[p]..] = T [j'..]$ and $j' = \SA [p] - 1$, where $p$ is the smallest position such that $\BWT [p] = c'$.
  Apparently $p$ corresponds to the first position of a run, and thus, we have $\SA [p]$ stored.
\end{itemize}
Finally we remark that if $k$ is the last position of $\BWT$ (namely $k+1$ is out of bounds),
we can obtain $j'$ without $j$, proceeding as in the second case.
\end{proof}

\section{Dynamizing the $r$-index}
\label{sec:dynamizing}

In order to locate all the occurrences of pattern $P$, we have to retrieve $\SA [i..j]$ (all of which may not be stored explicitly), where $[i..j]$ is the interval for $P$. If we can efficiently compute $\SA [k + 1]$ from a given value $\SA [k]$ for any $1 \le k < n$, then $\SA [i..j]$ can be retrieved incrementally from $\SA [i]$, which we get during a backward search by Lemma~\ref{lem:search}. Gagie, Navarro and Prezza~\cite{GNP18} showed how to solve this subproblem. Let $\BB$ be the set of pair $(\SA [k'], \SA [k' + 1])$ of text positions such that $k'$ and $k' + 1$ are on a run's boundary, i.e., $k'$ is the last position of some run of $\BWT$ and $k'+1$ is the first position of the next run. Consider a predecessor data structure to support the following query: for any text position $p$ of $T$, $\pred_{\BB}(p)$ returns $(x, y) \in \BB$ such that $x$ is the largest possible with $x \le p$. Then, the next lemma holds.

\begin{lemma}[\cite{GNP18}]\label{lem:locate}
For any $1 \le k < n$, $\SA [k + 1] = y + \SA [k] - x$ holds, where $\pred_{\BB}(\SA [k]) = (x, y)$.
\end{lemma}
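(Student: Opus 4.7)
The plan is to reinterpret $\SA[k+1]$ as a function of $\SA[k]$: define $\phi(p) = \SA[\SA^{-1}[p]+1]$, the text position of the lex-successor suffix of $T[p\mathinner{.\,.}]$. The claim becomes $\phi(p) = y + p - x$ whenever $\pred_{\BB}(p) = (x, y)$. My strategy is to show that $\phi$ behaves as an affine staircase of slope $1$: $\phi(p) - \phi(p-1) = 1$ unless $p$ is the first coordinate of some pair in $\BB$. A telescoping argument from the nearest such break point to the left of $p$ then delivers the formula.

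The technical heart is a local increment claim: if $k = \SA^{-1}[p]$ and $\BWT[k] = \BWT[k+1]$ (so $k$ and $k+1$ lie inside the same run), then $\phi(p-1) = \phi(p)-1$. This follows from the $\LF$-mapping identity $\LF(i) = C(\BWT[i]) + \rank_{\BWT[i]}(i)$, which immediately gives $\LF(k+1) = \LF(k)+1$ when $\BWT[k] = \BWT[k+1]$. But $\LF(k) = \SA^{-1}[p-1]$ and $\LF(k+1) = \SA^{-1}[\phi(p)-1]$, so these two inverse suffix array values differ by exactly $1$, which is precisely the assertion $\phi(p-1) = \phi(p) - 1$.

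Taking the contrapositive, $\phi(p) - \phi(p-1) \neq 1$ can only occur when $k = \SA^{-1}[p]$ is the last position of a run, i.e., exactly when $p = \SA[k]$ appears as a first coordinate of some pair in $\BB$. Now set $p = \SA[k]$ and let $(x, y) = \pred_{\BB}(p)$. By maximality of $x$, no text position $q$ with $x < q \le p$ is a first coordinate of a pair in $\BB$, so the local increment claim applies at every such $q$, giving $\phi(q) = \phi(q-1) + 1$. Telescoping from $q = x+1$ up to $q = p$ yields $\phi(p) = \phi(x) + (p-x)$, and since $(x, y) \in \BB$ directly gives $\phi(x) = y$, we conclude $\SA[k+1] = \phi(\SA[k]) = y + \SA[k] - x$.

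The only genuine obstacle is isolating the correct local invariant and verifying it via a clean $\LF$ calculation; once that is in hand the telescoping is immediate. A minor bookkeeping point is that $\pred_{\BB}(p)$ must in fact return a pair, which is ensured in practice by seeding $\BB$ with a sentinel corresponding to the smallest run-boundary text position, but this is not a mathematical difficulty.
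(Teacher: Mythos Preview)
Your proof is correct and follows essentially the same route as the paper's. Both arguments hinge on the same local fact---that when $\BWT[k]=\BWT[k+1]$ the positions $k$ and $k+1$ remain adjacent after applying $\LF$---and both iterate this from $k$ back to the nearest run boundary, which is reached after exactly $\SA[k]-x$ steps. The paper phrases the iteration as repeatedly applying $\LF$ to the pair $(k,k+1)$ and observing that the two suffixes $T[\SA[k]\mathinner{.\,.}]$ and $T[\SA[k+1]\mathinner{.\,.}]$ share a common preceding string of length $\SA[k]-x$; you phrase it as a telescoping identity $\phi(q)=\phi(q-1)+1$ over the text positions $x<q\le p$. These are the same computation viewed from the BWT side versus the text side (indeed $\LF^d(k)=\SA^{-1}[p-d]$), and your explicit use of the successor map $\phi$ makes the telescoping slightly more transparent, but there is no substantive difference in the argument.
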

\begin{proof}
By the definition of $x$, for any $0 \le d < \SA [k] - x$, $\BWT [\LF^d (k)]$ does not correspond to the end of a run while $\BWT [\LF^{\SA [k] - x} (k)]$ does. This means that the suffixes $T [\SA [k]..]$ and $T [\SA [k+1]..]$ are both preceded by the same string of length $\SA [k] - x$, and implies that for any $0 \le d' \le \SA [k] - x$ the suffixes $T [x+d'..]$ and $T [y+d'..]$ are lexicographically adjacent. By setting $d' = \SA [k] - x$, we see that the lexicographically next suffix of $T [\SA [k]..]$ is $T [y + \SA[k] - x..] = T [\SA [k+1]..]$, from which the statement immediately follows.
\end{proof}

In this paper, we show that the $r$-index can be constructed in an online manner while reading text from right to left
(or symmetrically appending characters to $T$ but constructing the RLBWT for $T^R$).
Let $r$ be the number of runs in the BWT string for the current text $T$.
Our online $r$-index maintains:
\begin{itemize}
\item a data structure to compute $\LF$ in $O(\log r)$ time,
\item a data structure to compute $\pred_{\BB}$ and insertion/deletion of new element to/from $\BB$ in $O(\log r)$ time (using a standard balanced search tree), and
\item a data structure to get, for each run of $\BWT$, the $\SA$ entry for the ``last'' character of the run.
\end{itemize}
Note that by combining the last two data structures we can retrieve the $\SA$ entry for the first character of a run, and thus,
we essentially have an access to the $\SA$ entries for the first and last character of every run.

Let $k$ be the position of $\$$ in the current $\BWT$.
Since $k - 1$ and respectively $k + 1$ are corresponding to last and first positions of runs (unless they are out of bounds of $\BWT$), we have $\SA[k - 1]$ and $\SA[k + 1]$. When we prepend $c$ to $T$, we first replace $\$$ with $c$, which might cause a merging of runs with the preceding run, the succeeding run, or both. As we have $\SA[k - 1]$, $\SA[k]$ and $\SA[k + 1]$, we can properly update the data structures. Next we update $\LF$ and insert $\$$ into the new position $k' = \LF (k)$. If $k'$ is on a runs's boundary, we need to update the data structures storing $\SA$ entries. In particular, when the insertion causes splitting a run, we need to know the $\SA [k' - 1]$ and $\SA [k' + 1]$, which might not be stored explicitly. Notice that $T [\SA [k' + 1]..]$ is the lexicographically smallest suffix that is larger than new $cT$. Since we have $\SA$ entry for the first character of a run and $\SA[k + 1]$, we can use Lemma~\ref{lem:tracking} to compute $\SA [k' + 1]$. In a symmetric way, $\SA [k' - 1]$ can be also obtained. The information is enough to deal with the changes of $\SA$ entries to be stored along with the insertion of $\$$ at $k'$.

\subsection{Experimental results}
We implemented in C++ our online $r$-index construction (the source code is available at~\cite{OnlineRLBWT}) 
and compared its performance with offline variants.
The implementation of offline $r$-index construction is taken from~\cite{Rindex} (and modified a little bit for our experiments),
which has three options to switch BWT construction algorithms:
\begin{itemize}
\item \texttt{divsuf}: BWTs are constructed via suffix arrays for which a fast suffix sorting of~\cite{divsuf} is used.
\item \texttt{dbwt}: Direct BWT construction~\cite{2009OkanoharaS_LinearTimeBurrowWheelTransb_SPIRE} based on induced sorting. The program from~\cite{dbwt} only supports input texts less than 4GiB.
\item \texttt{bigbwt}: Use a so-called prefix-free parsing technique, which is shown to be useful to reduce the working space and at the same time accelerate BWT construction~\cite{pfp_WABI,pfp_rindex}.
\end{itemize}
We note that these offline constructions first build the BWT and turn it into $r$-index.
Potentially any other BWT construction algorithm such as~\cite{2019Kempa_OptimConstOfComprIndex}
can be adopted, but to the best of our knowledge, \texttt{bigbwt} is the current state of the art, which scales up to pan-genomic data.
So we are mainly interested in the performance of our online method compared with \texttt{bigbwt}.
Our online variant is implemented based on the online RLBWT proposed in~\cite{OhnoSTIS18_jda},
which runs fast but uses $2r \log r$ bits to support rank queries (which is slightly costly compared to existing and offline variants).
All the experiments were conducted on a 6core Xeon E5-1650V3 (3.5GHz) machine using a single core with 32GiB memory running Linux CentOS7.

We tested on the datasets used in~\cite{GNP18} (and available in~\cite{PrezzaRindex}).
\begin{itemize}
\item DNA:\@ A pseudo-real DNA sequence consisting of 629145 copies of a DNA sequence of length 1000 where each character was mutated with probability $10^{-3}$.
\item boost: concatenated versions of GitHub's boost library.
\item einstein: concatenated versions of Wikipedia's article for Albert Einstein.
\item world\_leaders: a collection of all pdf files of CIA World Leaders from January 2003 to December 2009 from repcorpus.
\end{itemize}
We also tested on real genomic datasets obtained by concatenating up to 50 versions of chromosome 19.
Let chr19\_$x$ denote the dataset containing $x$ versions.
Following the setting of~\cite{pfp_rindex},
we removed all characters besides A, C, G, T and N from the sequences in advance
and delimited each sequence in chr19\_$x$ by a line break.

Table~\ref{table:datasets} shows the statistics of the datasets.
Note that our online variant creates BWTs for reversed input strings $T^R$
while offline variants create BWTs for $T$.
Comparing $r_{\mathit{f}}$ and $r$, which respectively represent the numbers of runs in BWTs of $T$ and $T^R$,
although it is empirically observed that $r_{\mathit{f}}$ and $r$ are 
growing at almost the same rate (e.g.\ see also~\cite{Belazzougui2015CompositeRepAwareDSs}),
it is an interesting open question how different they can be.

\begin{table}[t]
\caption{Statistics of datasets, where $\sigma$ is the alphabet size, $n$ is the text length,
$r_{\mathit{f}}$ is the number of runs in BWT for input text $T$, and $r$ is the number of runs in BWT for $T^R$.}\label{table:datasets}
\begin{center}
\begin{tabular}{|l|r|r|r|r|r|}
dataset&$\sigma$&$n$&$r_{\mathit{f}}$&$r$&$n / r$\\
\hline \hline
DNA            &  10& 629,140,006&  1,287,509&  1,288,876&    488\\
boost          &  96& 629,145,600&     62,026&     60,281& 10,437\\
einstein       & 194& 629,145,600&    958,672&    964,973&    652\\
world\_leaders &  89&  46,968,181&    573,487&    583,395&     81\\
\hline
chr19\_1       & 6&    59,128,984& 30,660,769& 30,660,114&      2\\
chr19\_10      & 6&   591,254,545& 32,225,838& 32,225,116&     18\\
chr19\_30      & 6& 1,773,750,965& 33,616,733& 33,617,233&     53\\
chr19\_50      & 6& 3,015,374,692& 34,687,124& 34,688,812&     87\\
\end{tabular}
\end{center}
\end{table}

Table~\ref{table:rindex_build} shows the comparison in construction time and working space for DNA, boost, einstein and world\_leaders. It shows that \texttt{online} runs in reasonable time while working in compressed space.

\begin{table}[t]
\caption{Comparison of online and offline $r$-indexes in construction time and working space.}\label{table:rindex_build}
\begin{center}
\begin{tabular}{|l|rrrr|}
\multirow{3}{*}{dataset}&
\multicolumn{4}{c|}{construction time (sec) and}\\
&\multicolumn{4}{c|}{working space (MiB)}\\
\cline{2-5}
&\texttt{online}&\texttt{divsuf}&\texttt{dbwt}&\texttt{bigbwt}\\ \hline \hline
\multirow{2}{*}{DNA}            &  284.22&  120.47&  201.83&  103.10\\
                                &   41.03&    4202&     958&     534\\ \hline
\multirow{2}{*}{boost}          &  213.00&  106.86&  378.54&   51.88\\
                                &    4.07&    4203&    1032&     257\\ \hline
\multirow{2}{*}{einstein}       &  268.72&  111.10&  432.20&   62.17\\
                                &   31.45&    4204&    1212&     264\\ \hline
\multirow{2}{*}{world\_leaders} &   20.21&    4.41&   14.75&    7.94\\
                                &   20.43&     316&   81.82&   96.32\\
\end{tabular}
\end{center}
\end{table}

Figure~\ref{fig:chr19} shows how the construction time and working space increase when the collection of chr19 sequences grows.
At the point of chr19\_50, \texttt{divsuf} used up 32GiB memory.
The working space of \texttt{dbwt} is smaller but nonetheless linearly increases as it uses $O(n \log \sigma \log\log_{\sigma} n)$ bits of space,
and the current program cannot process texts more than 4GiB.
On the other hand, \texttt{online} and \texttt{bigbwt} show a potential to handle more sequences.
The throughput of \texttt{online} is about 0.8 MiB / sec.
Since $r$ grows very slowly as sequences increase (see Table~\ref{table:datasets}),
we expect that the performance of \texttt{online} (both in terms of throughput and working space) is kept even when more sequences are added.
Hence we conclude that \texttt{online} and \texttt{bigbwt} complement each other, i.e.,
\texttt{bigbwt} can construct the $r$-index in a batch very efficiently, and after that, \texttt{online} can handle incrementally added sequences.

\begin{figure}[t]
	\centerline{\includegraphics[height=0.45\textwidth]{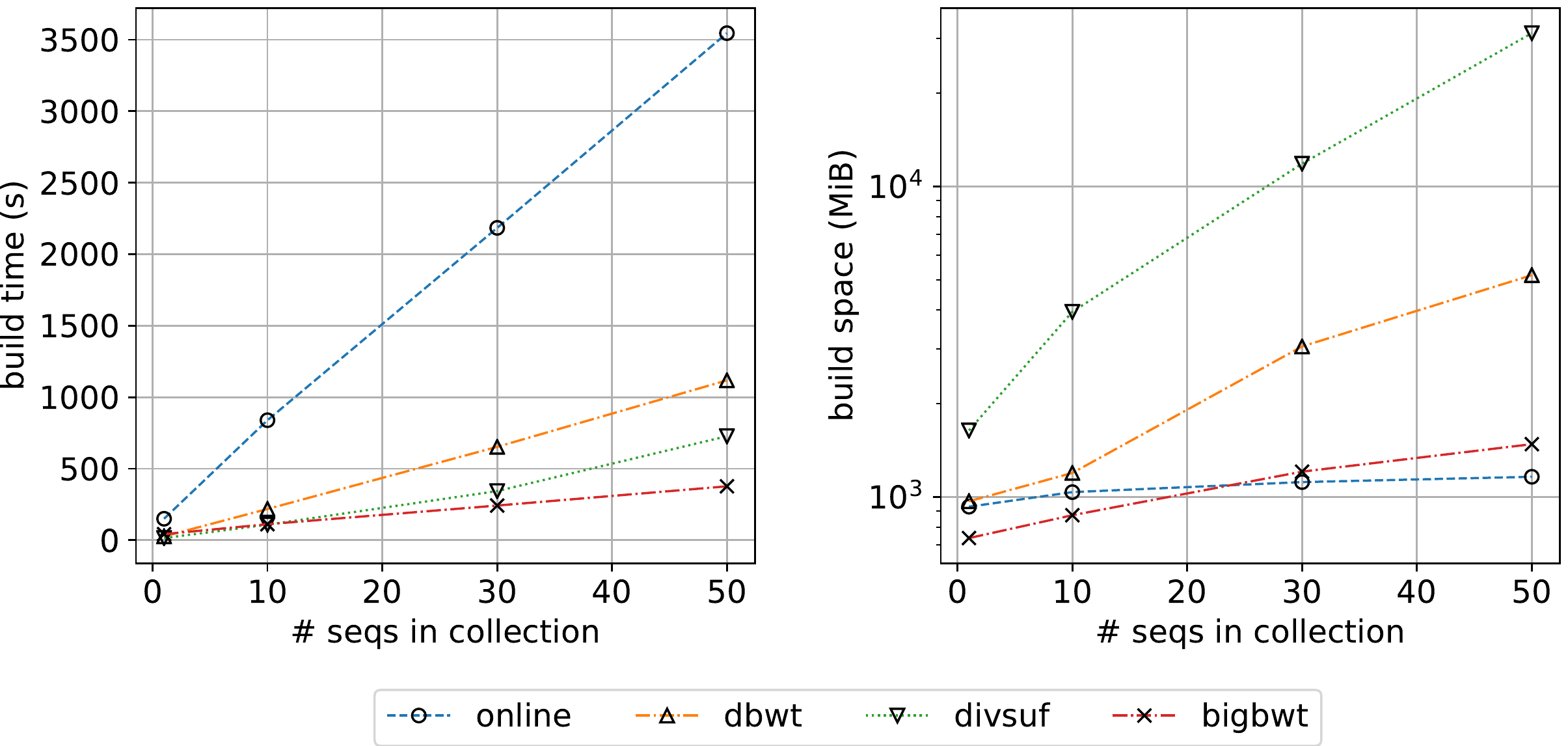}}
	\caption{Increase of build time and space of $r$-indexes.}
	\label{fig:chr19}
\end{figure}

Before finalizing the construction, our online $r$-index is always ready for answering count/locate queries as well as updating.
We tested the performance of count/locate comparing with the finalized $r$-index (i.e., \texttt{offline}).
For each dataset, we fed 1000 randomly chosen substrings of length 8 as patterns to count/locate.
In locating, both programs just list the occurrences (positions) in a vector as they find.
Table~\ref{table:rindex_cl} shows the results.
Firstly, \texttt{online} takes about four times more space than \texttt{offline}.
Besides the overhead needed to prepare for online updates, this could be attributed to the $2r \log r$ bits used in our base implementation of RLBWT\@.
The results for locate show a tendency that \texttt{online} is about 10\% slower than \texttt{offline}.
On the other hand in count operations, \texttt{online} sometimes outperformed \texttt{offline}.
This probably reflects the difference in the implementation of backward steps;
each backward step of \texttt{online} takes $O(\log r)$ time (regardless of the alphabet size)
while that of \texttt{offline} takes $O(\log (n/r) + H_0)$ time, where $H_0$ is the zero-order entropy of the run heads.

\begin{table}[t]
\caption{Comparison in count/locate operations.}\label{table:rindex_cl}
\begin{center}
\begin{tabular}{|l|rr|rr|rr|}
\multirow{3}{*}{dataset}&
\multicolumn{2}{c|}{data structure}&
\multicolumn{2}{c|}{count time}&
\multicolumn{2}{c|}{locate time}\\
&
\multicolumn{2}{c|}{size (MiB)}&
\multicolumn{2}{c|}{($\mu$s / pattern)}&
\multicolumn{2}{c|}{($\mu$s / occ)}\\
\cline{2-7}
&\texttt{online}&\texttt{offline}&\texttt{online}&\texttt{offline}&\texttt{online}&\texttt{offline}\\ \hline \hline
DNA            &  41.03&  13.13&  4.54&   8.03&  0.182&  0.149\\
boost          &   4.07&   0.76&  5.05&  12.71&  0.085&  0.092\\
einstein       &  31.45&  10.29&  7.35&  16.89&  0.121&  0.115\\
world\_leaders &  20.43&   5.37&  6.90&  14.64&  0.140&  0.120\\
\hline
chr19\_1       &  931 &  233&  12.16&   5.70&  0.0929&  0.0577\\
chr19\_10      &  1037&  298&  10.28&   8.20&  0.0687&  0.0674\\
chr19\_30      &  1117&  330&  10.35&   8.19&  0.0778&  0.0762\\
chr19\_50      &  1162&  355&  10.26&  11.68&  0.0854&  0.0791\\
\end{tabular}
\end{center}
\end{table}

\section{Online LZ77 Parsing}\label{sec:lz77}

Given a string $T$, LZ77~\cite{1977ZivL_UniverAlgorForSequenData_IeeeTransInfTheory} reads $T$ from left to right and
parses $T$ into phrases in a greedy manner so that every phrase $T[i..j+1]$ does not appear in $T[1..j]$ but $T[i..j]$ does appear in $T[1..j-1]$.
To determine the phrase, we extend the end position $j + 1$ of the phrase
until $T[i..j+1]$ becomes unequal to any substring in $T[1..j]$,
and so, $T[j+1]$ is called the mismatched character.
Each phrase $T[i..j+1]$ is encoded by a triple:
the starting position of a previous occurrence of $T[i..j]$ (choosing one arbitrary if there are several occurrences),
the length of the phrase, and the mismatched character.

To compute LZ77 online, we build an augmented RLBWT for $T^R$ incrementally, starting with the empty string and iteratively prepending $T [1], \ldots, T [n]$ (as shown in Section~\ref{sec:dynamizing}).  Our idea is to mix prepending characters to a suffix of $T^R$ with backward searching for a prefix of that suffix, which is equivalent to appending characters to a prefix of $T$ while searching for a suffix of that prefix. In contrast to the $r$-index, we only need to report one occurrence for an LZ77 phrase, and thus, the data structure can be simplified (specifically, $\pred_{\BB}$ is not needed).

\subsection{Updating an augmented RLBWT}\label{subsec:updating_2}

Recall that the augmented RLBWT of Subsection~\ref{subsec:toehold} has the $\SA$ entry for the first character of every run.
In Section~\ref{sec:dynamizing}, we explained how to update an augmented RLBWT while prepending characters.
Updating the augmented RLBWT for $T^R$ is symmetric when we append a character to $T$.  We can extend Ohno et al.'s implementation to support updates to the augmented RLBWT for $T^R$ in $O (n \log r)$ time and backward searches still in $O (\log r)$ time per character in the pattern.

\begin{lemma}\label{lem:updating_2}
We can build an augmented RLBWT for $T^R$ incrementally, starting with the empty string and iteratively prepending $T [1], \ldots, T [n]$ --- so that after $i$ steps we have an RLBWT for ${(T [1..i])}^R$ --- using a total of $O (n \log r)$ time.  Backward searches always take $O (\log r)$ time per character in the pattern.
\end{lemma}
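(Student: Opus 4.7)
The plan is to layer the SA-augmentation on top of the basic dynamic RLBWT of Ohno et al.\ provided by Lemma~\ref{lem:updating_1}, and to argue that the additional bookkeeping adds only $O(\log r)$ per prepended character. First I would take as a black box the underlying structure, which supports $\LF$, rank, backward search, and insertion/deletion of a single BWT character, each in $O(\log r)$ time while handling the resulting splits and merges of runs. Beside it I would keep a balanced search tree whose keys identify the runs (for instance, by the BWT positions of their starts) and whose values are the SA entries of those run heads; this tree has at most $r+1$ entries and supports insertion, deletion, and lookup in $O(\log r)$ time.

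Next I would describe the update triggered by prepending a character $c$. Let $k$ be the current position of $\$$; as in Section~\ref{sec:dynamizing} the value $\SA[k+1]$ is available from the augmentation when $k+1$ is in range, and the out-of-range case is handled as in the last paragraph of the proof of Lemma~\ref{lem:tracking}. The base structure rewrites $\BWT[k]$ from $\$$ to $c$, possibly merging $c$ with the preceding and/or succeeding run, and then inserts a new $\$$ at position $k' = \LF(k)$, possibly splitting the run that contained $k'$. For every run head that disappears through a merge I delete its SA entry; for the run head created by the freshly inserted $\$$ I insert the SA entry corresponding to the position of the newly prepended character in the text, which is known by construction. The only remaining case is the run head exposed on the right of the new $\$$ when a split occurs at $k'$, because that run head did not exist before and its SA value was therefore not stored.

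This missing value is exactly what Lemma~\ref{lem:tracking} returns: supplied with the character $c$ and with the known $\SA[k+1]$, it outputs the text position $j'$ such that $T[j'..]$ is the lexicographically smallest suffix strictly larger than $cT[\SA[k]..]$, which is by definition the SA entry of the run head immediately to the right of the new $\$$. Each step therefore performs $O(1)$ invocations of Lemma~\ref{lem:tracking} and $O(1)$ balanced-tree operations on top of the Ohno et al.\ update, so the total cost over the $n$ prepends is $O(n \log r)$. The backward-search bound is inherited from Lemma~\ref{lem:updating_1} unchanged, since the augmentation is consulted only on demand and each consultation is a single tree lookup.

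The main obstacle I expect is the case analysis rather than the asymptotics. Correctly distinguishing whether the $\$ \to c$ rewrite causes a merge on the left, on the right, on both sides, or on neither, and whether the subsequent insertion of $\$$ at $k'$ lands on a run boundary or strictly inside a run, determines which stored SA entries are deleted, which are kept unchanged, and which must be freshly supplied by Lemma~\ref{lem:tracking}. Keeping the invariant that every run head has exactly one correct SA entry across all of these sub-cases, including the boundary situations in which $k$ or $k'$ is at the very first or last position of the BWT, is the delicate part of the argument.
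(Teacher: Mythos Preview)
Your proposal is correct and follows essentially the same route as the paper. The paper does not give a formal proof of Lemma~\ref{lem:updating_2}; it simply points back to the update procedure described in Section~\ref{sec:dynamizing} (build on Ohno et al.'s dynamic RLBWT, handle merges at $k$ with the already-known $\SA[k-1],\SA[k],\SA[k+1]$, and use Lemma~\ref{lem:tracking} to recover the missing SA entry when the insertion at $k'=\LF(k)$ splits a run), observes that the $T^R$ case is symmetric, and asserts the $O(\log r)$ per-step cost. Your write-up makes exactly these moves, with the same key ingredient (Lemma~\ref{lem:tracking}) supplying the one SA value that is not already stored; your identification of the boundary case analysis as the delicate part matches the paper's informal treatment.
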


\subsection{Computing the parse}\label{subsec:parsing}

Suppose we currently have an augmented RLBWT for ${(T [1..j])}^R$ and the following information:
\begin{itemize}
\item the phrase containing $T [j + 1]$ in the LZ77 parse of $T$ starts at $T [i]$;
\item the non-empty interval $I$ for ${(T [i..j])}^R$ in the BWT for ${(T [1..j - 1])}^R$;
\item the position in ${(T [1..j - 1])}^R$ of the first character in $I$;
\item the interval $I'$ for ${(T [i..j + 1])}^R$ in the BWT for ${(T [1..j])}^R$;
\item the position in ${(T [1..j])}^R$ of the first character in $I'$, if $I'$ is non-empty.
\end{itemize}

If $I'$ is empty, then the phrase containing $T [j + 1]$ is $T [i..j + 1]$ with $T [j + 1]$ being the mismatch character, and we can compute the position of an occurrence of $T [i..j]$ in $T [1..j - 1]$ from the position of the first character in $I$.  We then prepend $T [j + 1]$ to ${(T [1..j])}^R$, update the augmented RLBWT, and start a new backward search for $T [j + 1]$.

If $I'$ is non-empty, then we know the phrase containing $T [j + 2]$ starts at $T [i]$, so we prepend $T [j + 1]$ to ${(T [1..j])}^R$, update the augmented RLBWT (while keeping track of the endpoints of $I'$), and perform a backward step for $T [j + 2]$ to obtain the interval $I''$ for ${(T [i..j + 2])}^R$ in the BWT for ${(T [1..j + 1])}^R$.  If $I''$ is non-empty, the augmented RLBWT returns the position in ${(T [1..j + 1])}^R$ of the first character in $I''$.

Continuing like this, we can simultaneously incrementally build the augmented RLBWT for $T^R$ while parsing $T$.  Each step takes $O (\log r)$ time and we use constant workspace on top of the augmented RLBWT, which always contains at most $r$ runs, so we use $O (r)$ space.  This gives us the following result:

\begin{theorem}\label{thm:lz77}
We can compute the LZ77 parse for $T [1..n]$ online using $O (n \log r)$ time and $O (r)$ space, where $r$ is the number of runs in the BWT for $T^R$.
\end{theorem}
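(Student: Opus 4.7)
The plan is to verify that the online parsing procedure laid out in Subsection~\ref{subsec:parsing} actually runs within the claimed bounds, by induction on the number of characters consumed. Throughout, I would maintain the invariant stated there: after reading $T[1..j]$ we possess the augmented RLBWT for ${(T[1..j])}^R$, the phrase-start index $i$, the interval $I'$ for ${(T[i..j+1])}^R$, and (when $I'$ is non-empty) the position in ${(T[1..j])}^R$ of the first character of $I'$. The base case $j=0$ is an empty RLBWT together with a fresh backward search; the inductive step must preserve the invariant while spending only $O(\log r)$ work.

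For the inductive step I would proceed in two sub-steps per character. First, prepend $T[j+1]$ to ${(T[1..j])}^R$ using Lemma~\ref{lem:updating_2}, updating the augmented RLBWT in $O(\log r)$ time while carrying the endpoints of $I'$ and its first-character SA entry into the new BWT. Second, perform a single backward step for $T[j+2]$ on the updated RLBWT, in $O(\log r)$ time, obtaining the interval $I''$ for ${(T[i..j+2])}^R$. If $I''$ is empty I would emit the triple for the phrase $T[i..j+2]$ — using the stored SA sample of $I'$, appropriately shifted, as the previous-occurrence position — advance $i$ to $j+3$, and restart the backward search; otherwise I would extend the current phrase and continue.

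The step I expect to be the main obstacle is transporting $I'$ and its first-character SA entry across the prepend, because prepending $T[j+1]$ overwrites $\$$, possibly merges neighbouring runs at that site, and reinserts $\$$ at $\LF(k)$, which may split a run inside $I'$. The remedy is exactly the argument of Lemma~\ref{lem:tracking} applied locally at the endpoints of $I'$: since the augmented RLBWT explicitly stores an SA entry at the first position of every run, either the first character of the updated $I'$ still represents the same text suffix (so its SA entry simply increments by one) or it coincides with the start of a run whose SA entry is already stored. Summing the $O(\log r)$ per-character cost over all $n$ characters yields the $O(n \log r)$ time bound, while the augmented RLBWT plus a constant number of tracked positions and phrase-state variables fits in $O(r)$ space, completing the proof.
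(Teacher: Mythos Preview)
Your proposal is correct and follows essentially the same approach as the paper: maintain the augmented RLBWT for the growing reversed prefix, interleave prepend operations (via Lemma~\ref{lem:updating_2}) with backward-search steps on the current phrase, emit a phrase when the interval becomes empty using the stored SA sample, and account $O(\log r)$ time per character with $O(r)$ total space. Your third paragraph merely makes explicit---by invoking Lemma~\ref{lem:tracking}---the bookkeeping the paper summarizes as ``while keeping track of the endpoints of $I'$''; otherwise the argument, the invariant, and the cost analysis coincide with Subsection~\ref{subsec:parsing}.
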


\subsection{Experimental results}\label{subsec:experiments}

We implemented in C++ the online LZ77 parsing algorithm of Theorem~\ref{thm:lz77} (the source code is available at~\cite{OnlineRLBWT}).
There are lots of work for LZ77 parsing (e.g., see~\cite{2011OhlebuschG_LempelZivFactorRevis_CPM,2013KempaP_LempelZivFactorSimplFast_ALENEX,Goto2013SaF,Goto2014SEL,KKP2013LZscan,Yamamoto2014FCO,PP2015h0lz77,2015FischerGGK_ApproxLz77ViaSmallSpace_ESA,Kosolobov2015FasterLightweightLZ,Belazzougui2016RangePredecessorAndLZParsing,PP18,2018FischerIKS_LempelZivFactorPowerBy,2019NishimotoT_ConverFromRlbwtToLz77_CPM} and references therein).
Among them we choose the ones whose implementations potentially work in the peak RAM usage smaller than $n \lg \sigma + n \lg n$ bits
and compare with our method.
We also tested a variant of LZ77 called LZ-End~\cite{Kreft2010LCw}, for which a space efficient method is proposed~\cite{2017KempaK_LzEndParsinInCompr}.
While it was reported in~\cite{Kreft2010LCw} that the compression ratio of LZ-End is worse than LZ77 up to 10\% for general texts and 20\% for the highly repetitive datasets, LZ-End allows us fast random access on compressed texts.
A brief explanation and setting of each method we tested is the following:
\begin{itemize}
\item \texttt{LZscan}~\cite{KKP2013LZscan,LZscan}. It runs in $O(nd \log (n/d))$ time and $(n/d) \lg n$ bits in addition to the input string, where $d$ is a parameter that can be used to control time-space tradeoffs. We set $d$ so that $(n/d) \lg n$ is roughly half of the input size.
\item \texttt{h0-lz77}~\cite{PP2015h0lz77,DYNAMIC}. Online LZ77 parsing based on BWT running in $O(n \log n)$ time and $n H_0 + o(n \lg \sigma) + O(\sigma \lg n)$ bits of space. The current implementation runs in $O(n \log n \log \sigma)$ time.
\item \texttt{rle-lz77-1}~\cite{PP18,DYNAMIC}. Offline LZ77 parsing algorithm based on RLBWT with two sampled suffix array entries for each run. In theory it runs in $O(n \log r)$ time and $2 r \lg n + r \lg \sigma + o(r \lg \sigma) + O(r \lg(n/r) + \sigma \lg n)$ bits of working space. The current implementation runs in $O(n \log r \log \sigma)$ time.
\item \texttt{rle-lz77-2}~\cite{PP18,DYNAMIC}. Offline LZ77 parsing algorithm based on RLBWT that theoretically runs in $O(n \log r)$ time and $z (\lg n + \lg z) + r \lg \sigma + o(r \lg \sigma) + O(r \lg(n/r) + \sigma \lg n)$ bits of working space. The current implementation runs in $O(n \log r \log \sigma)$ time.
\item \texttt{rle-lz77-o}~[Theorem~\ref{thm:lz77}]. To accomplish the parsing done in a reasonable time, our online RLBWT implementation is based on~\cite{OhnoSTIS18_jda}, which runs faster (actually in $O(n \log r)$ time) than~\cite{2017Prezza_FramewOfDynamDataStruc_SEA,DYNAMIC} but needs $2 r \lg r$ extra bits. Online LZ77 parsing can be done in $O(n \log r)$ time and $2r \lg r + r \lg n + O(r \lg(n/r) + \sigma \lg n)$ bits of working space.
\item \texttt{LZEnd}~\cite{2017KempaK_LzEndParsinInCompr}. An algorithm to compute LZ-End parsing in $O(n \log \ell)$ time with high probability and $O((z_{e} + \ell) \lg n)$ bits of space, where $z_{e}$ is the number of phrases of LZ-End and $\ell$ is the maximum length of the phrase. There is an option to set a limit of $\ell$, for which we use the default $\ell = 2^{20}$. In our experiments, we exclude the time for checking the correctness of the output.
\end{itemize}
For the above methods other than~\texttt{rle-lz77-2}, the output space is not counted in the working space since they compute phrases sequentially. On the other hand, \texttt{rle-lz77-2} counts $z \lg n$ bits of working space to store the starting positions of the phrases as they are not computed sequentially. 
While $r$ and $z$ are technically incomparable --- there are families of strings for which $r = \Theta(z \log n)$ and other for which $z = \Theta(r \log n)$~\cite{Prezza2016ComprComputForTextIndex} --- it is known that $z = O(r \log n)$ always~\cite{LZapproximation} while there are no good upper bounds on $r$ with respect to $z$. Moreover, in practice $z$ is usually much smaller than $r$ (see Table~\ref{table:exp}).

We tested on highly repetitive datasets in repcorpus\footnote{See \url{http://pizzachili.dcc.uchile.cl/repcorpus/statistics.pdf} for statistics of the datasets.}, a well-known corpus in this field, and some larger datasets created from git repositories.
For the latter, we use the script~\cite{getgit} to create 1024MiB texts
(obtained by concatenating source files from the latest revisions of a given repository, and truncated to be 1024MiB)
from the repositories for boost\footnote{\url{https://github.com/boostorg/boost}}, samtools\footnote{\url{https://github.com/samtools/samtools}} 
and sdsl-lite\footnote{\url{https://github.com/simongog/sdsl-lite}} (all accessed at 2017-03-27).
The programs were compiled using g++6.3.0 with -O3 -march=native option.
The experiments were conducted on a 6core Xeon E5-1650V3 (3.5GHz) machine using a single core with 32GiB memory running Linux CentOS7.

In Table~\ref{table:exp}, we compare our method \texttt{rle-lz77-o} with \texttt{rle-lz77-2}, which is the most relevant to our method as well as the most space efficient one. The result shows that our method significantly improves the running time while keeping the increase of the space within 4 times. It can be observed that the working space of \texttt{rle-lz77-o} gets worse as the input is less compressible in terms of RLBWT (especially for Escherichia\_Coli).

\begin{table}[ht]
\caption{Comparison of LZ77 parsing time and working space (WS) between \texttt{rle-lz77-o} (shortened as \texttt{-o}) and \texttt{rle-lz77-2} (shortened as \texttt{-2}), where $|T|$ is the input size (considering each character takes one byte), $z$ is the number of LZ77 phrases for $T$ and $r$ is the number of runs in RLBWT for $T^R$.}
\label{table:exp}
\begin{center}
\rotatebox{90}
{\begin{tabular}{|l|r|r|r|rr|rr|}
\multirow{2}{*}{dataset}&
\multirow{2}{*}{$|T|$ (MiB)}&
\multirow{2}{*}{$z$}&
\multirow{2}{*}{$r$}&
\multicolumn{2}{c|}{time (sec)}&
\multicolumn{2}{c|}{WS (MiB)}\\
\cline{5-8}&&&&
\texttt{-o}&\texttt{-2}&\texttt{-o}&\texttt{-2}\\ \hline \hline
fib41            & 255.503&       40&	        42&  131&      1334&    0.065&     0.071\\
rs.13            & 206.706&       39&	        76&  111&      1402&    0.065&     0.072\\
tm29             & 256.000&       54&	        82&  104&      1889&    0.065&     0.072\\
dblp.xml.00001.1 & 100.000&   48,882&	   172,195&   94&      4754&    2.694&     2.258\\
dblp.xml.00001.2 & 100.000&   48,865&	   175,278&   94&      4786&    2.744&     2.273\\
dblp.xml.0001.1  & 100.000&   58,180&	   240,376&   97&      4823&    3.791&     2.714\\
dblp.xml.0001.2  & 100.000&   58,171&	   269,690&   97&      4804&    4.253&     2.860\\
dna.001.1        & 100.000&  198,362&	 1,717,162&  114&      3951&   27.537&     9.672\\
english.001.2    & 100.000&  216,828&	 1,436,696&  112&      4884&   23.115&    10.177\\
proteins.001.1   & 100.000&  221,819&	 1,278,264&  111&      4288&   20.481&     9.246\\
sources.001.2    & 100.000&  178,138&	 1,211,104&  105&      4886&   19.524&     9.007\\
cere             & 439.917&1,394,808&	11,575,582&  737&     17883&  199.436&    73.154\\
coreutils        & 195.772&1,286,069&	 4,732,794&  252&      9996&   78.414&    51.822\\
einstein.de.txt  &  88.461&   28,226&	    99,833&   82&      4098&    1.606&     1.618\\
einstein.en.txt  & 445.963&   75,778&	   286,697&  437&     21198&    4.675&     3.773\\
Escherichia\_Coli& 107.469&1,752,701&	15,045,277&  233&      4674&  255.363&    72.527\\
influenza        & 147.637&  557,348&	 3,018,824&  168&      5909&   49.319&    23.078\\
kernel           & 246.011&  705,790&	 2,780,095&  291&     12053&   46.036&    28.426\\
para             & 409.380&1,879,634&	15,635,177&  734&     17411&  272.722&    91.515\\
world\_leaders   &  44.792&  155,936&	   583,396&   43&      2002&    9.092&     5.932\\
\hline                             
boost            &1024.000&   20,630&       63,710&  925&     46760&    1.094&     1.344\\
samtools         &1024.000&  158,886&      562,326& 1020&     48967&    9.445&     7.190\\
sdsl             &1024.000&  210,501&      758,657& 1010&     47964&   12.677&     9.138\\
\end{tabular}}
\end{center}
\end{table}

Figure~\ref{fig:plot} compares all the tested methods for some selected datasets.
It shows that \texttt{rle-lz77-o} exhibits an interesting time-space tradeoff: running in just a few times slower than \texttt{LZscan} while working in compressed space.
Compared to \texttt{LZEnd}, \texttt{rle-lz77-o} is slightly slower but working in much smaller space in most cases.
After the conference version of this paper was published, an extended experiment was conducted in~\cite{OhnoSTIS18_jda} where two versions are added for testing the performance of their RLBWT construction mixed with \texttt{rle-lz77-1} and \texttt{rle-lz77-2}. The results show that \texttt{rle-lz77-o} is still outstanding, almost dominating those two versions.

\begin{figure}[t]
	\centerline{\includegraphics[height=0.9\textwidth]{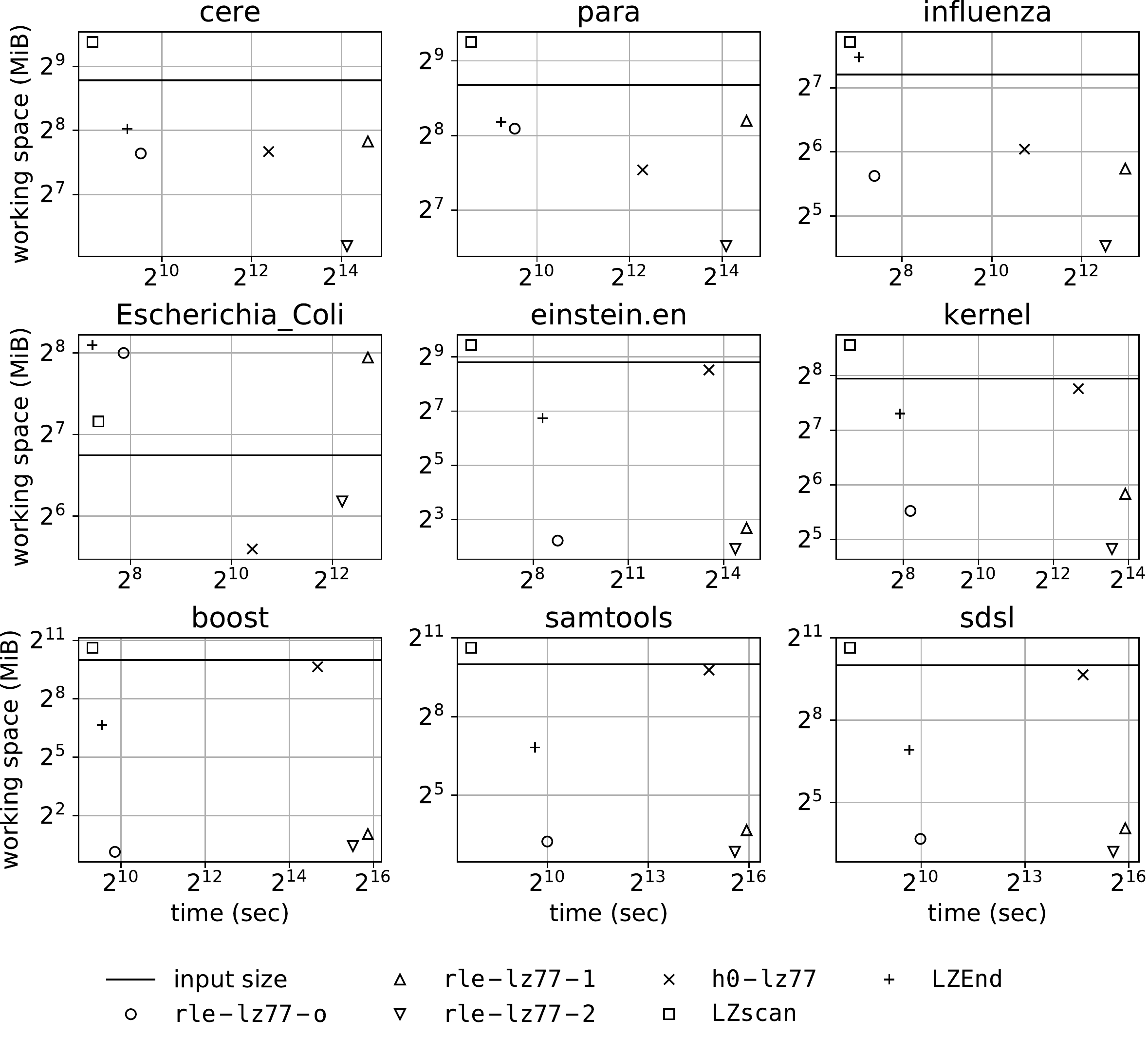}}
	\caption{Comparison of LZ77 parsing time and working space.}
	\label{fig:plot}
\end{figure}

\section{Matching Statistics}
\label{sec:statistics}

The matching statistics of $S [1..m]$ with respect to $T$ tell us, for each suffix $S [i..m]$ of $S$, what is the length $\ell_i$ of the longest substring $S [i..i + \ell_i - 1]$ that occurs in $T$ and the position $p_i$ of one of its occurrences there.  We can compute $\ell_i$ and $p_i$ using an RLBWT for $T^R$ with SA entries stored at the beginnings of runs, by performing a backward search for each ${(S [i..m])}^R$ --- i.e., performing a backward step for $S [i]$, then another for $S [i + 1]$, etc.\ --- until the interval in the BWT becomes empty, and then undoing the last backward step.  However, to compute all the matching statistics this way takes time proportional to the sum of all the $\ell$ values --- which can be quadratic in $m$ --- times the time for a backward step.

Suppose we use Policriti and Prezza's augmented RLBWT for $T$ (which stores the positions in $T$ of both the first and last character of each run) to perform a backward search for $S$ --- i.e., performing a backward step for $S [m]$, then another for $S [m - 1]$, etc.\ --- until the interval in the BWT becomes empty, and then undo the last backward step.  This gives us the last few $\ell$ and $p$ values in the matching statistics for $S$, and the interval $\BWT [i..j]$ for some suffix $S [k..m]$ of $S$ such that $S [k - 1..m]$ does not occur in $T$ (meaning $S [k - 1]$ does not occur in $\BWT [i..j]$).  Consider the suffixes of $T$ starting with the occurrences of $S [k - 1]$ preceding $\BWT [i]$ and following $\BWT [j]$ in the BWT, which are the last and first characters in runs, respectively.  By the definition of the BWT, one of these two suffixes has the longest common prefix (LCP) with $S [k - 1..m]$ --- and, equivalently, with $S [k - 1] T [p_k..n]$ --- of all the suffixes of $T$.  Therefore, if we know which of those two suffixes has the longer common prefix with $S [k - 1] T [p_k..n]$, we can deduce $p_{k - 1}$.

Our first idea is to further augment Policriti and Prezza's RLBWT such that, for any position $i$ in the BWT and any character $c$, we can tell whether $c T [\SA [i]..n]$ has a longer common prefix with the suffix of $T$ starting with the occurrence of $c$ preceding $\BWT [i]$, or with the one starting with the occurrence of $c$ following $\BWT [i]$.  Although it sounds at first as if this should use $\Omega (n)$ space, in fact it takes constant space per run in the BWT as we will see in Subsection~\ref{subsec:augmenting}.  With this information, we can compute the $p$ values for the matching statistics, using a right-to-left pass over $S$.

Once we have the $p$ values, we use a left-to-right pass over $S$ to compute the $\ell$ values.  Notice that it would again take time at least proportional to the sum of the $\ell$ values, to start at each $T [p_i]$ and extract characters until finding a mismatch.  Since $\ell_{i + 1}$ cannot be less than $\ell_i - 1$, however, if we have a compact data structure that supports $O (\log \log n)$-time random access to $T$ --- such as a Tabix index~\cite{Tabix}, the RLZ parse implemented with a y-fast trie or a bitvector~\cite{KPZ10,CFGPS16}, or a variant of that approach adapted for VCF --- then we can compute all the $\ell$ values in $O (m \log \log n)$ total time using small space.  Since the size of the RLZ parse is generally comparable to that of the RLBWT when there is a natural reference sequence (which is the case when dealing with databases of genomes from the same species) and most genomic databases are stored in VCF anyway, using random access to $T$ seems unlikely to be an obstacle in practice. 

\subsection{Further augmentation}\label{subsec:augmenting}

For each consecutive pair of runs $\BWT [g..h]$ and $\BWT [j..k]$ of a character $c$, we add to Policriti and Prezza's augmented RLBWT the threshold position $i$ between the end $h$ of the first run and the start $j$ of the second run such that, for $h < i' \leq i$, each string $T [\SA [i']..n]$ has a longer common prefix with $T [\SA [h]..n]$ than with $T [\SA [j]..n]$ and, for $i < i' < j$, each string $T [\SA [i']..n]$ has a longer common prefix with $T [\SA [j]..n]$ than with $T [\SA [h]..n]$.  By the definition of the BWT and the SA, the lengths of the longest common prefixes with $T [\SA [h]..n]$ are non-increasing as we go from $T [\SA [h + 1]..n]$ to  $T [\SA [j]..n]$, and the lengths of the longest common prefixes with  $T [\SA [j]..n]$ are non-decreasing; therefore there is at most one such threshold $i$.  This adds a total of $O (r)$ space (where $r$ is now the number of runs in BWT of $T$, not $T^R$).
If we store such threshold $i$ by associating with the run $\BWT [g..h]$, we can retrieve $i$ from any position in between $h$ and $j$ by a single predecessor query, which can be answered in $O (\log \log n)$ time by building the data structure of~\cite{2015BelazzouguiN_OptimLowerAndUpperBound}. This leads to the following lemma.

\begin{lemma}\label{lem:threshold}
We can augment an RLBWT for $T$ with $O (r)$ words, where $r$ is the number of runs in the BWT for $T$, such that for any position $i$ in the BWT and any character $c$, in $O (\log \log n)$ time we can tell whether $c T [\SA [i]..n]$ has a longer common prefix with the suffix of $T$ starting with the occurrence of $c$ preceding $\BWT [i]$, or with the one starting with the occurrence of $c$ following $\BWT [i]$.
\end{lemma}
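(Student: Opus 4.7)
The plan is to make precise the construction already sketched before the statement. For each pair of consecutive runs of the same character $c$, say $\BWT[g..h]$ and $\BWT[j..k]$ with no occurrence of $c$ between them, I would first justify the existence of a unique threshold. Since the suffixes $T[\SA[i']..n]$ for $h < i' < j$ are lexicographically sandwiched between $T[\SA[h]..n]$ and $T[\SA[j]..n]$, the standard monotonicity of LCP on sorted suffixes gives that $\mathrm{lcp}(T[\SA[i']..n], T[\SA[h]..n])$ is non-increasing in $i'$ and $\mathrm{lcp}(T[\SA[i']..n], T[\SA[j]..n])$ is non-decreasing. Hence the set of $i'$ for which the first LCP is strictly larger forms a (possibly empty) prefix of $(h,j)$, and its right endpoint is the threshold $i^*$ to be stored.

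Next, I would describe the data structure. For each character $c$ separately, I collect the thresholds (at most one per pair of consecutive $c$-runs), of which there are fewer than $r$ in total across all characters. These positions are stored in a Belazzougui--Navarro predecessor structure~\cite{2015BelazzouguiN_OptimLowerAndUpperBound}, which supports predecessor queries in $O(\log\log n)$ time and uses $O(r)$ words. In addition, for each character $c$ I would maintain a similar predecessor structure on the starting positions of the $c$-runs in the BWT, so that given an arbitrary position $i$ I can locate in $O(\log\log n)$ time the occurrence of $c$ preceding $\BWT[i]$ (which is the last character of some $c$-run ending at some position $h$) and the occurrence of $c$ following $\BWT[i]$ (the first character of the next $c$-run starting at some $j$). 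All this fits in $O(r)$ words.

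To answer a query $(i,c)$, I locate $h$ and $j$ as above; if either does not exist the answer is forced. Otherwise I retrieve the unique threshold $i^*$ associated with that pair by a predecessor query among the $c$-thresholds, and output the preceding occurrence iff $i \le i^*$ (equivalently, strictly greater than the threshold means the following occurrence wins). By the construction this answers exactly the comparison between $\mathrm{lcp}(c\,T[\SA[i]..n], cT[\SA[h]..n])$ and $\mathrm{lcp}(c\,T[\SA[i]..n], cT[\SA[j]..n])$, since prepending $c$ to both compared suffixes adds $1$ to each LCP and preserves the winner.

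The main obstacle I anticipate is performing each of the three lookups (preceding $c$-run, following $c$-run, and threshold) within the stated $O(\log\log n)$ bound, rather than the $O(\log r)$ that the rest of the paper's implementation uses. This is why I plan to use the Belazzougui--Navarro predecessor structure rather than a balanced BST; care must also be taken that the per-character structures only store positions of run-boundaries of that character, keeping the total size $O(r)$ instead of $O(r\sigma)$, which is the improvement claimed over the conference version.
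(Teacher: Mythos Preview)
Your proposal is correct and follows essentially the same approach as the paper: establish a unique threshold per gap between consecutive $c$-runs via the monotonicity of LCPs in suffix-array order, store one threshold per such gap (hence $O(r)$ words total), and answer queries with predecessor searches using the Belazzougui--Navarro structure for the $O(\log\log n)$ bound. The only cosmetic difference is that the paper attaches the threshold directly to the preceding $c$-run as satellite data, so a single predecessor query among $c$-run boundaries suffices, whereas you perform separate lookups for the neighbouring $c$-runs and for the threshold; both organizations give the same time and space bounds.
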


Notice that, when we add a new genome to the database, we need to recompute the positions of the thresholds only when characters are inserted in the BWT exactly at those positions or at the beginnings and endings of runs.  We are currently working with this property to make this version of the $r$-index dynamic as well, by augmenting the $r$-index to support limited LCP queries~\cite[Section 3.2]{GNPjournalversion}.

\subsection{Algorithm}\label{subsec:algorithm}

As we have said, our algorithm consists of first computing all the $p$ values in the matching statistics using a right-to-left pass over $S$, then computing all the $\ell$ values using a left-to-right pass.  We first choose $q$ to be the position of the first or last character in any run and set $t$ to be its position in $T$.  We then walk backward in $S$ and $T$ until we find a mismatch $S [i] \neq \BWT [q]$, at which point we reset $q$ to be the position of either the copy of $S [i]$ preceding $\BWT [q]$ or of the one following it, depending on whether $\BWT [q]$ is before or after the threshold position for $S [i]$ in the gap between the preceding and following runs of $S [i]$.  The threshold position can be retrieved in $O(\log \log n)$ time by Lemma~\ref{lem:threshold}. Also time for backward-stepping can be made $O (\log \log n)$ with Policriti and Prezza's RLBWT, so we use a total of $O (m \log \log n)$ time.  Algorithm~\ref{alg:p_values} shows pseudocode.

Once we have the $p$ values, we make a left-to-right pass over $S$ to compute the $\ell$ values.  We start with $S [1]$ and $T [p_1]$ and walk forward, comparing $S$ to $T$ character by character, until we find a mismatch $S [1 + \ell_1 - 1] \neq T [p_1 + \ell_1 - 1]$, and set $\ell_1$ appropriately.  We know $\ell_2 \geq \ell_1 - 1$, so $S [2..2 + \ell_1 - 2] = T [p_2..p_2 + \ell_1 - 2]$ and we can jump directly to comparing $S [2 + \ell_1 - 1..m]$ to $T [p_2 + \ell_1 - 1..m]$ character by character until we find a mismatch, $S [2 + \ell_2 - 1] \neq T [p_2 + \ell_2 - 1]$, and set $\ell_2$ appropriately.  Continuing like this with $O (\log \log n)$-time random access to $T$, we compute all the $\ell$ values in $O (m \log \log n)$ time.  Algorithm~\ref{alg:ell_values} shows pseudocode.  This gives us our second main result:

\begin{theorem}\label{thm:statistics}
We can augment an RLBWT for $T$ with $O (r)$ words, where $r$ is the number of runs in the BWT for $T$, such that later, given $S [1..m]$ and $O (\log \log n)$-time random access to $T$, we can compute the matching statistics for $S$ with respect to $T$ in $O (m \log \log n)$ time.
\end{theorem}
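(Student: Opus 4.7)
The plan is to build on Policriti and Prezza's augmented RLBWT for $T$ (storing $\SA$ entries at the first and last positions of every run) and further equip it with the threshold positions of Lemma~\ref{lem:threshold}; the whole structure fits in $O(r)$ words. Given $S[1..m]$ and $O(\log \log n)$-time random access to $T$, I compute the matching statistics in two passes: a right-to-left pass over $S$ that produces a witness occurrence $p_i$ for each suffix $S[i..m]$, followed by a left-to-right pass that recovers the lengths $\ell_i$.

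During the right-to-left pass I maintain a BWT position $q$ together with its text position $t = \SA[q]$, with the invariant that after $S[i..m]$ has been processed, $T[t..]$ shares the longest common prefix with $S[i..m]$ among all suffixes of $T$. To handle $S[i-1]$ I inspect $\BWT[q]$: if $\BWT[q] = S[i-1]$ then the character preceding $T[t..]$ in $T$ already matches, so I perform a backward step $q \leftarrow \LF(q)$, $t \leftarrow t-1$; otherwise I invoke the threshold query of Lemma~\ref{lem:threshold} at $q$ with character $S[i-1]$ to choose between the last position of the preceding $S[i-1]$-run and the first position of the following $S[i-1]$-run, reset $q$ to that run-boundary position (whose $\SA$-entry is stored), and then take the backward step. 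In either branch I record $p_{i-1} := t$ after the update. Each iteration costs $O(1)$ $\LF$-steps, one threshold query, and $O(1)$ run-boundary $\SA$ look-ups, all in $O(\log \log n)$ time with the Policriti-Prezza machinery, so the pass runs in $O(m \log \log n)$ time overall.

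The main obstacle is correctness of the threshold-based restart. The key observation is that for any $T$-suffix $T[p..]$ starting with $S[i-1]$, the length of its longest common prefix with $S[i-1..m]$ equals one plus the length of the longest common prefix of $T[p+1..]$ with $S[i..m]$; by the pre-restart invariant, $T[t..]$ is the $T$-suffix maximizing the latter. Finding the best $T[p..]$ therefore reduces to finding an occurrence of $S[i-1]$ in the BWT whose image under $\LF$ shares the longest prefix with $T[t..]$ among all such images. By the standard BWT monotonicity of longest common prefixes inside a single gap between two consecutive runs of the same character, this optimum is attained at exactly one of the two boundary occurrences of $S[i-1]$ flanking $q$ — the end of the preceding $S[i-1]$-run or the start of the following one — and the threshold stored for that gap is, by Lemma~\ref{lem:threshold}, precisely the BWT position at which the winning boundary flips, so comparing $q$ against the threshold selects the correct side.

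For the left-to-right pass I exploit the classical matching-statistics inequality $\ell_{i+1} \geq \ell_i - 1$: since $S[i..i+\ell_i-1]$ occurs in $T$, its suffix $S[i+1..i+\ell_i-1]$ also occurs in $T$, and in fact appears at $T[p_{i+1}..]$ by construction. Consequently the comparison of $S[i+1..]$ against $T[p_{i+1}..]$ can be resumed directly from offset $\ell_i - 1$. I begin by extending the match at $T[p_1]$ character by character until a mismatch fixes $\ell_1$; for each subsequent $i$ I skip the first $\ell_{i-1} - 1$ characters and resume character-by-character comparison from $T[p_i + \ell_{i-1} - 1]$. A standard amortization bounds the total number of comparisons by $O(m)$, since each mismatch is charged once to its index and each matched character advances a global pointer into $T$ that moves only forward; with $O(\log \log n)$-time random access to $T$ this pass costs $O(m \log \log n)$. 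Combining the two passes on top of the $O(r)$-word augmented RLBWT yields the claimed bounds.
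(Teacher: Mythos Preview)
Your proposal is correct and follows essentially the same two-pass strategy as the paper: a right-to-left pass that uses the threshold augmentation of Lemma~\ref{lem:threshold} on top of Policriti and Prezza's RLBWT to compute the witness positions $p_i$, followed by a left-to-right pass that exploits $\ell_{i+1}\ge\ell_i-1$ and random access to $T$ to recover the lengths, both in $O(m\log\log n)$ time. Your write-up is somewhat more explicit about the correctness invariant (that the tracked suffix $T[t..]$ maximises the LCP with $S[i..m]$), which the paper leaves largely implicit; the only minor slip is the phrase ``image under $\LF$'' in the third paragraph---what you actually need is that $T[\SA[q']..]$ (not $T[\SA[\LF(q')]..]$) has maximal LCP with $T[t..]$ among $S[i-1]$-positions---but the surrounding argument makes clear you understand this.
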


We note as an aside that, in practice, we do not really need to store information at both ends of runs of the BWT.  If we store information only at the beginning of each run but adapt the data structures of the $r$-index to support $\phi$ queries~\cite{KMP09} instead of $\phi^{-1}$ queries and during a backward search always keep track of the last entry in the current SA interval, then if we need the SA entry for the end of a run we can compute it from the SA entry at the next character (the beginning of a run) and a $\phi$ query.  We must modify the Toehold Lemma again slightly: suppose we have processed $P [i..m]$, the current interval is $\BWT [j..k]$ and we know $\SA [k]$; if $\BWT [k] \neq P [i - 1]$ then we find the last occurrence $\BWT [k']$ of $P [i - 1]$ in $\BWT [j..k]$, which is the end of interval; we have $\SA [k' + 1]$ stored, since $\BWT [k' + 1]$ is the beginning of a run, and we can compute $\SA [k']$ with a $\phi$ query.  The details of $\phi$ queries are beyond the scope of this paper, so we refer the reader to the papers on the $r$-index that we have cited.

\begin{algorithm}
\caption{Computing $p$ values for the matching statistics of $S$ with respect to $T$, using an augmented RLBWT for $T$.  For simplicity we ignore special cases, such as when some character in $S$ does not occur in $T$.}\label{alg:p_values}
\begin{algorithmic}[0]
\Procedure{computePs}{$S$}
\State $q \gets$ position of the first or last character in any run
\State $t \gets$ position of $\BWT [q]$ in $T$
\For{$i \gets m \ldots 1$}
  \If{$\BWT [q] \neq S [i]$}
    \If{$\BWT [q]$ is before the threshold between the preceding and following runs of $S [i]$}
      \State $q \gets$ position of the preceding occurrence of $S [i]$ in the BWT
    \Else
      \State $q \gets$ position of the following occurrence of $S [i]$ in the BWT
    \EndIf
    \State $t \gets$ position of $\BWT [q]$ in $T$
  \EndIf
  \State $p_i \gets t$
  \State $q \gets \LF (q)$
  \State $t \gets t - 1$
\EndFor
\EndProcedure
\end{algorithmic}
\end{algorithm}

\begin{algorithm}
\caption{Computing $\ell$ values for the matching statistics of $S$ with respect to $T$, using the $p$ values and random access to $T$.  Again, for simplicity we ignore special cases, such as when some character in $S$ does not occur in $T$.}\label{alg:ell_values}
\begin{algorithmic}[0]
\Procedure{computeLs}{$S, p_1, \ldots, p_m$}
\State $\ell_0 \gets 1$
\For{$i \gets 1 \ldots m$}
  \State $\ell_i \gets \ell_{i - 1} - 1$
  \While{$S [i + \ell_i] = T [p_i + \ell_i]$}
    \State $\ell_i \gets \ell_i + 1$
  \EndWhile
\EndFor
\EndProcedure
\end{algorithmic}
\end{algorithm}

\subsection{Application: Rare-disease detection}
\label{subsec:diseases}

Each substring $S [i..i + \ell_i - 1]$ is necessarily a right-maximal substring of $S$ that has a match in $T$, but not necessarily a left-maximal one.  We can easily post-process the matching statistics of $S$ in $O (m)$ time to find the maximal substrings with matches in $T$: if $\ell_i = \ell_{i + 1} + 1$, then we discard $\ell_{i + 1}$ and $p_{i + 1}$.  Similarly, in $O (m)$ time we can find all the minimal substrings of $S$ that have no matches in $T$: for each maximal matching substring of $S$, extending it either one character to the right or one character to the left yields a minimal non-matching substring; assuming each character in $S$ occurs in $T$, this yields all the minimal non-matching substrings of $S$.

Finding all the non-matching substrings of a string relative to a large database of strings has applications to bioinformatics, specifically, in rare-disease discovery.  For example, we might want to preprocess a large database of human genomes such that when a patient arrives with an unknown disease we suspect to be genetic, we can quickly find all the minimal substrings of his or her genome that do not occur in the database.

\subsection{Application: Extending BWA-MEM to work with genomic databases}
\label{subsec:BWA-MEM}

BWA-MEM~\cite{BWA-MEM} is part of the popular BWA aligner but, unlike standard BWA, it does not try to match entire reads.  Instead, it looks for maximal exact matches (MEMs) between reads and reference, and uses those as anchors for the alignment.  This approach makes BWA-MEM better suited to handling chimeric reads resulting from structural variation in genomes (i.e., cases in which parts of the genome are arranged differently in different individuals, so the first part of a read matches to one part of the reference but the rest matches somewhere else), as well as longer but more error-prone reads.

We believe that BWA-MEM can benefit even more than Bowtie or BWA from using an entire genomic database as a reference instead of a single genome.  Suppose that we store at the beginning and end of each run in the BWT the position in the standard reference that character aligns to.  Then, when processing a read with several variations that we have seen before individually but never all together (which is more likely with longer, third-generation reads), we can still see quickly if the MEMs all align consistently to the same region of the standard reference.  In contrast, BWA-MEM with a single reference cannot find matches that span variation sites.

We are currently working to extend the $r$-index to report the positions where MEMs align but, even just considering a static version, this could require significant modification of the construction algorithms, which themselves are still in development~\cite{pfp_WABI,pfp_rindex}.  We are optimistic, however: our current constructions are based on prefix-free parsing, which generates a dictionary and a parse, and it seems that we can augment the parse slightly (specifically, with a range-minimum data structure over its LCP array) such that, given two positions in the SA, we can quickly compute the length of the their LCP.  Our plan is to complete the implementation and demonstration of the $r$-index without support for maximal exact matching, and then collaborate with bioinformaticians to determine what is the best way to add that functionality.

\section*{Acknowledgements}
We would like to thank the anonymous reviewers for their insightful comments to improve our manuscript.
We also thank Dominik Kempa for sending us the implementation of~\cite{2017KempaK_LzEndParsinInCompr}.

\section*{References}

\bibliographystyle{elsarticle-num}
\bibliography{refining}

\end{document}